\providecommand{\U}[1]{\protect\rule{.1in}{.1in}}
\newtheorem{theorem}{Theorem}
\newtheorem{lemma}[theorem]{Lemma}
\newenvironment{proof}[1][Proof]{\noindent\textbf{#1.} }{\ \rule{0.5em}{0.5em}}
\begin{document}

\title{Reversibility of distance mesures of states with some focus on total variation distance}
\author{Keiji Matsumoto\\Quantum Computation Group, National Institute of Informatics, \ \\2-1-2 Hitotsubashi, Chiyoda-ku, Tokyo 101-8430, \\e-mail:keiji@nii.ac.jp}
\maketitle

\section*{Abstract}

Consider a classical system, which is in the state described by probability
distribution $p$ or $q$, and embed these classical informations into quantum
system by a physical map $\Gamma$, $\rho=\Gamma(p)$ and $\sigma=\Gamma(q)$.
Intuitively, the pair $\{p_{\rho}^{M},p_{\sigma}^{M}\}$ of the distributions
of the data of the measurement $M$  on the pair $\{\rho,\sigma\}$ should
contain strictly less information than the pair $\{p,q\}$ provided the pair
$\{\rho,\sigma\}$ is non-commutative. Indeed, this statement had been shown if
the information is measured by $f$-divergence such that $f$ is operator
convex. In the paper, the statement is extended to the case where $f$ is
strictly convex. Also, we disprove the assertion for the total variation
distance $\Vert p-q\Vert_{1}$, the $f$-divergence with $f(r)=|1-r|$: if
$\{\rho,\sigma\}$ satisfies some not very restrictive conditions, $\Vert
p_{\rho}^{M}-p_{\sigma}^{M}\Vert_{1}$ equals $\Vert p-q\Vert_{1}$. Here we
present sufficient condition for general case, and necessary and sufficient
condition for qubit states.

\section{Introduction}

Consider a classical system, which is in the state described by probability
distribution $p$ or $q$, and embed these classical informations into quantum
system by a physical map $\Gamma$, $\rho=\Gamma(p)$ and $\sigma=\Gamma(q)$.
Intuitively, the pair $\{p_{\rho}^{M},p_{\sigma}^{M}\}$ of the distributions
of the data of the measurement $M$ on the pair $\{\rho,\sigma\}$ should
contain strictly less information than the pair $\{p,q\}$ provided the pair
$\{\rho,\sigma\}$ is non-commutative. Indeed, this statement had been shown if
the information is measured by $f$-divergence such that $f$ is operator convex
\cite{HiaiMilan}. In the paper, the statement is extended to the case where
$f$ is strictly convex. Also, we disprove the assertion for the total
variation distance $\Vert p-q\Vert_{1}$, the $f$-divergence with $f(r)=|1-r|$:
if $\{\rho,\sigma\}$ satisfies some not very restrictive conditions, $\Vert
p_{\rho}^{M}-p_{\sigma}^{M}\Vert_{1}$ equals $\Vert p-q\Vert_{1}$. Here we
present sufficient condition for general case, and necessary and sufficient
condition for qubit states.

\section{Embedding Classical Information Into Quantum States}

Consider a classical memory system, whose state is described by probability
distribution $p$ or $q$ depending on the value of the bit recorded. Suppose we
embed this information into quantum system by some physical operation $\Gamma
$, or completely positive trace preserving (CPTP) map from commutative system
into operators $\mathcal{B}\left(  \mathcal{H}\right)  $ over Hilbert space
$\mathcal{H}$. (In this paper, we stick to the finite dimensional case.) Then
we obtain a quantum system whose state is either $\rho=\Gamma(p)$ or
$\sigma=\Gamma(q)$ depending on the value of the bit.

Suppose now we are given $\left\{  \rho,\sigma\right\}  $, and the question is
how much of information is contained in $\{p,q\}$. The answer relies on the
measure of information, and also on the choice of $\Gamma$. In the paper, we
use $f$- divergence between $p$ and $q$ to measure the amount of information:%
\[
\mathrm{D}_{f}(p\Vert q)\colon=\sum_{x\in\mathcal{X}}q_{x}f(p_{x}/q_{x}),
\]
where $\mathcal{X}$ is a finite set, and $f$ is a convex function. In the
definition, we used the convention
\begin{align*}
0\cdot f\left(  p/0\right)  \colon &  =p\hat{f}(0),\\
\text{ }\hat{f}(0)\colon &  =\lim_{\varepsilon\downarrow0}\varepsilon
f(1/\varepsilon)=\lim_{r\rightarrow\infty}r^{-1}f(r).
\end{align*}
By choosing $f$ properly, $f$- divergence represents almost all frequently
used distance measures (or their monotone function): relative entropy
(Kullback-Leibler divergence), Renyi relative entropy, total variation
distance, and so on.

As for dependence of $\Gamma$, we suppose the encoder did their best: thus our
question is to find
\[
\mathrm{D}_{f}^{\max}(\rho\Vert\sigma)=\min_{\left(  \Gamma,\{p,q\}\right)
}\mathrm{D}_{f}(p\Vert q),
\]
where $\left(  \Gamma,\{p,q\}\right)  $ moves over all the triple satisfying
\begin{equation}
\rho=\Gamma(p),\,\sigma=\Gamma(q). \label{r-test}%
\end{equation}
Since resulted from optimization problem, $\mathrm{D}_{f}^{\max}$ is monotone
decreasing by CPTP maps. Also, when $\left[  \rho,\sigma\right]  =0$, it
reduces to its classical version $\mathrm{D}_{f}$.

If $f$ is operator convex,
\[
\mathrm{D}_{f}^{\max}(\rho\Vert\sigma)=\mathrm{tr}\,\sigma f(\sigma^{-1/2}%
\rho\sigma^{-1/2}),
\]
provided $\sigma>0$ and $\rho>0$ \cite{Matsumoto}. Examples are $r\ln r$, and
\[
f_{\alpha}(r)=(\pm)r^{\alpha},\,\,\alpha\in(-1,1)\text{,}%
\]
where the sign $\pm$ is chosen so that $f_{\alpha}$ is convex are operator
convex. The former and the latter corresponds to relative entropy and relative
Renyi entropy, respectively.

However, the function $|1-r|$, which corresponds to total variation distance
\[
\left\Vert p-q\right\Vert _{1}=\sum_{x\in\mathcal{X}}\left\vert p_{x}%
-q_{x}\right\vert ,
\]
is \textit{not} operator convex, and there is no known closed formula for
$\mathrm{D}_{\left\vert 1-r\right\vert }^{\max}(\rho\Vert\sigma)$.

\section{Reversibility}

To read classical information from a quantum source $\left\{  \rho
,\sigma\right\}  $, a measurement $M$ is applied to the system, to produce the
probability distributions:
\[
p_{x}^{\prime}=\mathrm{tr}\,M_{x}\rho,q_{x}^{\prime}=\mathrm{tr}\,\sigma
M_{x}.
\]
Obviously,
\begin{equation}
\mathrm{D}_{f}^{\max}(\rho\Vert\sigma)\geq\mathrm{D}_{f}(p^{\prime}\Vert
q^{\prime}). \label{measure-decrease}%
\end{equation}
If $\left[  \rho,\sigma\right]  =0$, the identity in the above inequality
holds: in fact, if $f$ is strictly convex
\[
f(cr_{1}+(1-c)r_{2})<cf(r_{1})+(1-c)f(r_{2}),\,\,\,\,0<c<1,
\]
this is the only possible case for the equality to holds. Some preparations
are necessary to prove the assertion. Let $\left\{  p,q\right\}  $ and
$\left\{  p^{\prime},q^{\prime}\right\}  $ be a probability distribution over
a finite set $\mathcal{X}$ and $\mathcal{Y}$, respectively. Define
$\mathcal{X}_{0}\colon=\{x;q_{x}=0\}$, and
\[
r_{x}\colon=\left\{
\begin{array}
[c]{cc}%
p_{x}/q_{x}, & x\notin\mathcal{X}_{0},\\
\infty, & x\in\mathcal{X}_{0}.
\end{array}
\right.
\]
$\mathcal{Y}_{0}$ and $r_{y}^{\prime}$ are defined almost analogously.

\begin{lemma}
\label{lem:r-block}Suppose there is a transition probability $P(y|x)$ with%
\[
p_{y}^{\prime}=\sum_{x\in\mathcal{X}}P\left(  y|x\right)  p_{x},\,\,q_{y}%
^{\prime}=\sum_{x\in\mathcal{X}}P\left(  y|x\right)  q_{x},
\]
and there is a strictly convex function on $(0,\infty)$ with
\[
\mathrm{D}_{f}(p\Vert q)=\mathrm{D}_{f}(p^{\prime}\Vert q^{\prime})<\infty.
\]
Then $P(x|y)=0$ for all $x$ and $y$ with $r_{x}\neq r_{y}$. \ 
\end{lemma}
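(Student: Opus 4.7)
The plan is to recognize this as the equality case of the data-processing inequality for $f$-divergences under the Markov kernel $P(y|x)$, and then extract the structural constraint on $P$ from strict convexity of $f$. The key tool is Jensen's inequality. For each $y$ with $q_y' > 0$, the weights
\[
w_y(x) := \frac{P(y|x)\, q_x}{q_y'},\quad x \in \{x : q_x > 0\},
\]
form a probability distribution on $\mathcal{X}$, and (assuming for the moment that $q > 0$ everywhere) $p_y'/q_y' = \sum_x w_y(x) r_x$. Convexity of $f$ gives $q_y' f(p_y'/q_y') \leq \sum_x P(y|x)\, q_x\, f(r_x)$; summing over $y$ and using $\sum_y P(y|x)=1$ yields $\mathrm{D}_f(p'\Vert q') \leq \mathrm{D}_f(p\Vert q)$.

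The hypothesis that these two divergences coincide and are finite then forces each Jensen inequality to be tight. Strict convexity of $f$ on $(0,\infty)$ now implies that $r_x$ must be constant as $x$ ranges over $\{x : P(y|x) > 0\}$, and that common value is exactly $\sum_x w_y(x) r_x = p_y'/q_y' = r_y'$. Contrapositively, $r_x \neq r_y'$ forces $P(y|x) = 0$, which is the conclusion.

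The main obstacle is bookkeeping for the infinite values of $r_x$ and $r_y'$. If $q_y' = 0$ (so $r_y' = \infty$), then $q_y' = 0$ directly forces $P(y|x) q_x = 0$ for every $x$, so $P(y|x) = 0$ whenever $r_x < \infty$, as required. If $q_x = 0$ but $p_x > 0$ (so $r_x = \infty$), finiteness of $\mathrm{D}_f(p\Vert q)$ forces $\hat{f}(0) < \infty$; one then extends the Jensen bound by allowing such $x$ to enter the convex combination as a point mass at $\infty$, via the convention $0\cdot f(p/0) = p\,\hat{f}(0)$, noting that strict convexity combined with $\hat f(0) = \lim_{r\to\infty} f(r)/r$ still distinguishes a convex combination with an $\infty$-atom from one with all finite entries. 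An alternative route is to perturb $q \mapsto (1-\varepsilon) q + \varepsilon u$ (with $u$ uniform) to reduce to the strictly-positive-support case and pass to $\varepsilon \downarrow 0$, though preserving the equality hypothesis under the limit requires additional care.
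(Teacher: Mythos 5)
Your proposal is correct and follows essentially the same route as the paper: form the reverse weights $Q(x|y)=P(y|x)q_x/q_y'$, apply Jensen to get the data-processing inequality, and use strict convexity to force $r_x=r_y'$ on the support of each column, with separate bookkeeping for $q_y'=0$ and for the mass from $\mathcal{X}_0$. The only cosmetic difference is that the paper handles the $\mathcal{X}_0$ contribution by splitting $f(r)=f_0(r)+\hat f(0)r$ with $f_0$ strictly decreasing, whereas you absorb it into an extended (perspective-function) Jensen inequality with an $\infty$-atom; the key fact you invoke, that $f(s+t)<f(s)+t\hat f(0)$ for $t>0$ because $f'<\hat f(0)$ strictly, is exactly what the paper's decomposition encodes.
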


\begin{proof}
First, we prove the case where $\hat{f}(0)<\infty$. Then $f(r)$ decomposes
into
\[
f(r)=f_{0}(r)+\hat{f}(0)r,
\]
where $f_{0}$ is monotone non\thinspace-\thinspace increasing. Then
\[
\mathrm{D}_{f}(p\Vert q)=\sum_{x\notin\mathcal{X}_{0}}q_{x}f_{0}(r_{x}%
)+\sum_{x\in\mathcal{X}}p_{x}\hat{f}(0).
\]%
\begin{align*}
&  \mathrm{D}_{f}(p^{\prime}\Vert q^{\prime})\\
&  =\sum_{y\notin\mathcal{Y}_{0}}q_{y}^{\prime}f_{0}(r_{y}^{\prime}%
)+\sum_{y\in\mathcal{Y}}p_{y}\hat{f}(0)\\
&  =\sum_{y\notin\mathcal{Y}_{0}}q_{y}^{\prime}f_{0}(\sum_{x\notin
\mathcal{X}_{0}}Q(x|y)r_{x}+\frac{1}{q_{y}^{\prime}}\sum_{x\in\mathcal{X}_{0}%
}P(y|x)p_{x})\\
&  +\sum_{y\in\mathcal{Y}}p_{y}\hat{f}(0),
\end{align*}
where
\[
Q(x|y):=\frac{q_{x}}{q_{y}^{\prime}}P(y|x),\,x\notin\mathcal{X}_{0}%
,\,y\notin\mathcal{Y}_{0}.
\]
Since$\frac{1}{q_{y}^{\prime}}\sum_{x\in\mathcal{X}_{0}}P(y|x)q_{x}\geq0$ and
$\sum_{x\notin\mathcal{X}_{0}}Q(x|y)=1$,
\begin{align*}
\mathrm{D}_{f}(p^{\prime}\Vert q^{\prime})  &  \leq\sum_{y\notin
\mathcal{Y}_{0}}q_{y}^{\prime}f_{0}(\sum_{x\notin\mathcal{X}_{0}}%
Q(x|y)r_{x})+\sum_{y\in\mathcal{Y}}p_{y}\hat{f}(0)\\
&  \leq\sum_{x\notin\mathcal{X}_{0}}\sum_{y\notin\mathcal{Y}_{0}}%
Q(x|y)q_{y}^{\prime}f_{0}(r_{x})+\sum_{y\in\mathcal{Y}}p_{y}\hat{f}(0)\\
&  =\sum_{x\notin\mathcal{X}_{0}}q_{x}f_{0}(r_{x})+\sum_{y\in\mathcal{Y}}%
p_{y}\hat{f}(0)\\
&  =\mathrm{D}_{f}(p\Vert q).
\end{align*}
Since $f$ is strictly convex, $\mathrm{D}_{f}(p^{\prime}\Vert q^{\prime
})=\mathrm{D}_{f}(p\Vert q)$ holds only if
\[
\sum_{x\in\mathcal{X}_{0}}P(y|x)p_{x}=0,\,y\notin\mathcal{Y}_{0}.
\]
and
\[
Q(x|y)=0,\,\,r_{x}\neq r_{y}^{\prime},\,x\notin\mathcal{X}_{0},\,y\notin
\mathcal{Y}_{0}.\,
\]
These are equivalent to
\[
P(y|x)=0,r_{x}\neq r_{y}^{\prime},\,y\notin\mathcal{Y}_{0}.
\]
Also, the condition $q_{y}^{\prime}=\sum_{x\in\mathcal{X}}P\left(  y|x\right)
q_{x}$ implies
\[
P(y|x)=0,\,y\in\mathcal{Y}_{0},\,x\notin\mathcal{X}_{0}.
\]
Therefore, we have the assertion provided $\hat{f}(0)<\infty$.

Next, we study the case where $\hat{f}(0)=\infty$. Then $\mathrm{D}_{f}(p\Vert
q)=\mathrm{D}_{f}(p^{\prime}\Vert q^{\prime})<\infty$ implies $\mathcal{X}%
_{0}=\emptyset$ and $\mathcal{Y}_{0}=\emptyset$. Then doing almost analogously
as above, we have the assertion.
\end{proof}

\begin{theorem}
Suppose $f$ is strictly convex function on $(0,\infty)$, and $\mathrm{D}%
_{f}^{\max}(\rho\Vert\sigma)<\infty$. Then the equality in the inequality
(\ref{measure-decrease}) holds only if $[\rho,\sigma]=0.$
\end{theorem}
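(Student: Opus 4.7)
The natural approach is to pull the question back to the classical Lemma \ref{lem:r-block}. First I would fix an optimal reverse test $(\Gamma_0,\{p_0,q_0\})$ attaining $\mathrm{D}_f^{\max}(\rho\Vert\sigma)$; such a test exists in finite dimension by a standard compactness argument (Carath\'eodory bounds $|\mathcal{X}|$, and $\mathrm{D}_f$ is lower semi-continuous on the resulting compact set). Writing $\rho_x:=\Gamma_0(e_x)$ gives $\rho=\sum_x p_0(x)\rho_x$ and $\sigma=\sum_x q_0(x)\rho_x$, and the composite $M\circ\Gamma_0$ becomes a classical channel with transition probabilities $P(y|x):=\mathrm{tr}\,M_y\rho_x$ sending $(p_0,q_0)$ to $(p',q')$. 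By hypothesis the data processing inequality (\ref{measure-decrease}) holds with equality, so $\mathrm{D}_f(p_0\Vert q_0)=\mathrm{D}_f^{\max}(\rho\Vert\sigma)=\mathrm{D}_f(p'\Vert q')<\infty$, and Lemma \ref{lem:r-block} yields $\mathrm{tr}\,M_y\rho_x=0$ whenever $r_x\neq r'_y$. Since $M_y,\rho_x\geq 0$, this in fact forces $M_y\rho_x=0$.

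Next I group indices by common ratio: let $\mathcal{X}_i:=\{x:r_x=r_i\}$ and $\mathcal{Y}_i:=\{y:r'_y=r_i\}$, and define $\Pi_i:=\sum_{y\in\mathcal{Y}_i}M_y$. For $x\in\mathcal{X}_i$ and $j\neq i$, summing $M_y\rho_x=0$ over $y\in\mathcal{Y}_j$ gives $\Pi_j\rho_x=0$, and then $\sum_k\Pi_k=I$ forces $\Pi_i\rho_x=\rho_x$. Let $V_i:=\{v:\Pi_i v=v\}$. For $v\in V_i$, the identity $\sum_{j\neq i}\Pi_j v=0$ combined with $\langle v,\Pi_j v\rangle\geq 0$ (so each summand vanishes) and positivity of $\Pi_j$ yields $\Pi_j v=0$ for $j\neq i$; hence $\langle v,w\rangle=\langle v,\Pi_j w\rangle=\langle\Pi_j v,w\rangle=0$ for $v\in V_i$, $w\in V_j$, establishing $V_i\perp V_j$. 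Since $\rho_x=\Pi_i\rho_x$ and $\rho_x$ is Hermitian, $\mathrm{range}(\rho_x)\subset V_i$.

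Consequently $\rho$ and $\sigma$ decompose as $\rho=\bigoplus_i\rho^{(i)}$ and $\sigma=\bigoplus_i\sigma^{(i)}$ along the orthogonal decomposition $\bigoplus_i V_i$, where $\rho^{(i)}=\sum_{x\in\mathcal{X}_i}p_0(x)\rho_x$ and similarly for $\sigma^{(i)}$. Within block $i$ with $r_i<\infty$ one has $p_0(x)=r_i q_0(x)$ for every $x\in\mathcal{X}_i$, hence $\rho^{(i)}=r_i\sigma^{(i)}$; if $r_i=\infty$ then $q_0(x)=0$ on $\mathcal{X}_i$, so $\sigma^{(i)}=0$. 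In either case $\rho$ and $\sigma$ are proportional on each block of a common orthogonal decomposition, so $[\rho,\sigma]=0$. The main place where care is required is step two: one must extract honest orthogonality of the eigenspaces $V_i$ from a mere POVM $\{\Pi_i\}$ rather than a projective measurement, but once $V_i$ is identified as the eigenspace of $\Pi_i$ with eigenvalue $1$, positivity of the remaining $\Pi_j$ is exactly what is needed to close the argument.
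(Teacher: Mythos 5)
Your proof is correct and follows essentially the same route as the paper: reduce to the classical channel $P(y|x)=\mathrm{tr}\,M_y\Gamma(\delta_x)$, invoke Lemma~\ref{lem:r-block}, group indices by the likelihood ratio, and deduce that the coarse-grained POVM elements force the ratio-blocks of $\rho$ and $\sigma$ onto mutually orthogonal subspaces. Your write-up actually supplies two details the paper leaves implicit --- the existence of an optimal reverse test and the eigenspace argument showing the $V_i$ are genuinely orthogonal even though $\{\Pi_i\}$ is only a POVM --- but the underlying argument is the same.
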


If $f$ is non\thinspace-\thinspace linear and operator convex, it is strictly
convex. Therefore, the theorem applies to relative and Renyi relative entropy.

\begin{proof}
Let $\left(  \Gamma,\left\{  p,q\right\}  \right)  $ be a triplet achieving
$\mathrm{D}_{f}^{\max}(\rho\Vert\sigma)=\mathrm{D}_{f}(p\Vert q)$. Then the
equality in the inequality (\ref{measure-decrease}) holds only if
\[
\mathrm{D}_{f}(p\Vert q)=\mathrm{D}_{f}(p^{\prime}\Vert q^{\prime})<\infty.
\]
Since the composition of $\Gamma$ followed by the measurement $M$ is a linear,
positive, and probability preserving map, there is a transition probability
$P(y|x)$ such that
\[
p_{y}^{\prime}=\sum_{x\in\mathcal{X}}P(y|x)\,p_{x},q_{y}^{\prime}=\sum
_{x\in\mathcal{X}}P(y|x)q_{x},
\]
and
\[
P(y|x)=\mathrm{tr}\,M_{y}\Gamma(\delta_{x}),
\]
where $\delta_{x}$ is delta distribution at $x$. Therefore, by Lemma\thinspace
\ref{lem:r-block}, \thinspace$\mathrm{tr}\,M_{y}\Gamma(\delta_{x})=0$ provided
$r_{x}\neq r_{y}^{\prime}$.

Define
\begin{align*}
\rho_{r}\colon &  =\sum_{x:r_{x}=r}p_{x}\Gamma(\delta_{x}),\,\sigma_{r}%
\colon=\sum_{x:r_{x}=r}q_{x}\Gamma(\delta_{x}),\,\\
\tilde{M}_{r}\colon &  =\sum_{y:r_{y}^{\prime}=r}M_{y}.
\end{align*}
Then if $r<\infty$, observe $\rho_{r}=r\sigma_{r}$, and
\[
\mathrm{tr}\,\rho_{r}\tilde{M}_{r^{\prime}}=\mathrm{tr}\,\sigma_{r}\tilde
{M}_{r^{\prime}}=0,\,r\neq r^{\prime}.
\]
Therefore, supports of positive operators
\[
\left\{  \sigma_{r},r\in\lbrack0,\infty)\right\}  \cup\left\{  \rho_{\infty
}\right\}
\]
are non\thinspace-\thinspace overlapping with each other.
\end{proof}

Therefore, the assertion $\left[  \rho,\sigma\right]  =0$ follows since
\[
\rho=\sum_{r\in\lbrack0,\infty)}r\sigma_{r}+\rho_{\infty},\,\sigma=\sum
_{r\in\lbrack0,\infty)}\sigma_{r}.
\]

This theorem means that the classical information embedded into non-orthogonal
states cannot be recovered completely by any measurement. At first glance, the
statement seems almost trivial, but in the proof we fully exploit the fact
that $f$ is strictly convex, and in fact, is not true if the information
measure is total variation distance.

\section{Total variation distance}

\subsection{Set up and a general formula}

Total variation distance, or the divergence corresponding to $f\left(
r\right)  =\left\vert 1-r\right\vert $, is one of most frequently used
distance measures between two probability distributions. Its most common
quantum version is
\[
\left\Vert \rho-\sigma\right\Vert _{1}=\sup_{M}\left\Vert P_{\rho}%
^{M}-P_{\sigma}^{M}\right\Vert _{1},
\]
where $P_{\rho}^{M}$ is the distribution of the outcome of the measurement $M$
under $\rho$. Obviously,
\[
\mathrm{D}_{\left\vert 1-r\right\vert }^{\max}\left(  \rho\Vert\sigma\right)
\geq\left\Vert \rho-\sigma\right\Vert _{1}.
\]
Given a triple $\left(  \Gamma,\left\{  p,q\right\}  \right)  $ of $\left\{
\rho,\sigma\right\}  $, we define $\left(  \Gamma^{\prime},\left\{  p^{\prime
},q^{\prime}\right\}  \right)  $, where $\left\{  p^{\prime},q^{\prime
}\right\}  $ are probability distributions on $\left\{  0,1,2\right\}  $:
\begin{align}
\Gamma^{\prime}\left(  \delta_{0}\right)   &  :=\frac{1}{\mathrm{tr}%
\,A}A,\Gamma^{\prime}\left(  \delta_{1}\right)  :=\frac{\rho-A}{\mathrm{tr}%
\,\left(  \rho-A\right)  },\nonumber\\
\Gamma^{\prime}\left(  \delta_{2}\right)   &  :=\frac{\sigma-A}{\mathrm{tr}%
\,\left(  \sigma-A\right)  },\nonumber\\
p^{\prime}\left(  0\right)   &  :=\mathrm{tr}\,A,\,\,p^{\prime}\left(
1\right)  :=\mathrm{tr}\,\left(  \rho-A\right)  ,\,\,\,p^{\prime}\left(
2\right)  :=0,\nonumber\\
q^{\prime}\left(  0\right)   &  :=\mathrm{tr}\,A,\,\,q^{\prime}\left(
1\right)  :=0,\,\,\,q^{\prime}\left(  2\right)  :=\mathrm{tr}\,\left(
\sigma-A\right)  . \label{opt-|1-l|}%
\end{align}
where
\[
A:=\sum_{x\in\mathcal{X}}\min\left\{  p\left(  x\right)  ,q\left(  x\right)
\right\}  \Gamma\left(  \delta_{x}\right)  .
\]
Then $\left(  \Gamma^{\prime},\left\{  p^{\prime},q^{\prime}\right\}  \right)
$ satisfies (\ref{r-test}) and $\left\Vert p^{\prime}-q^{\prime}\right\Vert
_{1}=\left\Vert p-q\right\Vert _{1}$.

(Intuitively, $\Gamma^{\prime}\left(  \delta_{0}\right)  $ takes care of the
common part of two states, and $\Gamma^{\prime}\left(  \delta_{1}\right)  $
and $\Gamma^{\prime}\left(  \delta_{2}\right)  $ compensates the reminder.)

Therefore, without loss of generality, we may restrict ourselves to the one in
the form of (\ref{opt-|1-l|}), where $A$ is an operator with
\[
A\geq0,\rho\geq A,\sigma\geq A.
\]
Therefore, we have:
\begin{align}
&  \mathrm{D}_{\left\vert 1-r\right\vert }^{\max}\left(  \rho\Vert
\sigma\right) \nonumber\\
&  =\inf\left\{  \mathrm{tr}\,\left(  \rho+\sigma-2A\right)  ;A\geq0,\rho\geq
A,\sigma\geq A\right\}  . \label{Dmax-|1-l|}%
\end{align}

\subsection{Reversibility}

In this subsection and the next, suppose $\mathrm{tr}\,\rho=\mathrm{tr}%
\,\sigma=1$. We study the conditions for
\begin{equation}
\mathrm{D}_{\left\vert 1-r\right\vert }^{\max}\left(  \rho\Vert\sigma\right)
=\left\Vert \rho-\sigma\right\Vert _{1}. \label{D=TV}%
\end{equation}
This implies that any quantum version of statistical distance $\mathrm{D}%
_{\left\vert 1-r\right\vert }^{Q}\left(  \rho\Vert\sigma\right)  $ equals to
$\left\Vert \rho-\sigma\right\Vert _{1}$. Intuitively, this means classical
statistical distance encoded into quantum states can be completely retrieved.
As stated, such complete retrieval of $f$\thinspace-\thinspace divergence
scarcely occurs if $f$ is operator convex and $\rho$ and $\sigma$ do not
commute. The statistical distance is very different from $f$\thinspace
-\thinspace divergence induced by an operator convex function in this respect.

If we drop the constraint $A\geq0$ and suppose $\mathrm{tr}\,\rho
=\mathrm{tr}\,\sigma$,
\begin{align*}
&  \mathrm{D}_{\left\vert 1-r\right\vert }^{\max}\left(  \rho\Vert
\sigma\right) \\
&  \geq\inf\left\{  \mathrm{tr}\,\left(  \rho+\sigma-2A\right)  ;\rho\geq
A,\sigma\geq A\right\} \\
&  =\inf\left\{  2\mathrm{tr}\,\left(  \rho-A\right)  ;\rho\geq A,\sigma\geq
A\right\} \\
&  =\inf\left\{  2\mathrm{tr}\,\left(  \rho-A\right)  ;\rho-A\geq0,\rho
-A\geq\rho-\sigma\right\} \\
&  =2\mathrm{tr}\,\left[  \rho-\sigma\right]  _{+}=\left\Vert \rho
-\sigma\right\Vert _{1}.
\end{align*}
Here, the minimum in the third line is achieved if $\rho-A=\,\left[
\rho-\sigma\right]  _{+}$. ($\left[  X\right]  _{+}$ is the positive part of
the self-adjoint operator $X$.)

Therefore, (\ref{D=TV}) holds iff \
\begin{equation}
A=\rho-\,\left[  \rho-\sigma\right]  _{+}=\frac{1}{2}\left(  \rho
+\sigma-\left\vert \rho-\sigma\right\vert \right)  \geq0. \label{A>0}%
\end{equation}
(Here, $\left\vert X\right\vert :=\sqrt{X^{\dagger}X}$.) \ Another necessary
and sufficient condition is the existence of $A$, $\Delta_{1}$, $\Delta
_{2}\geq0$ with
\begin{align}
\rho &  =A+\Delta_{1},\sigma=A+\Delta_{2},\,\label{A+d}\\
\Delta_{1}\Delta_{2}  &  =0. \label{dd=0}%
\end{align}
To see this, observe
\begin{align*}
\left\Vert \Delta_{1}-\Delta_{2}\right\Vert _{1}  &  =\left\Vert \rho
-\sigma\right\Vert _{1}\\
&  \leq\mathrm{D}_{\left\vert 1-r\right\vert }^{\max}\left(  \rho\Vert
\sigma\right) \\
&  =\min\left\{  \mathrm{tr}\,\Delta_{1}+\mathrm{tr}\,\Delta_{2}%
;(\ref{A+d}),\Delta_{1}\geq0,\Delta_{2}\geq0\right\}  .
\end{align*}
For (\ref{D=TV}) to hold, existence of $\Delta_{1}$, $\Delta_{2}$ with
$\mathrm{tr}\,\Delta_{1}+\mathrm{tr}\,\Delta_{2}=\left\Vert \Delta_{1}%
-\Delta_{2}\right\Vert _{1}$ is necessary and sufficient. Thus $\Delta
_{1}\Delta_{2}=0$.

Of course, in general, (\ref{A>0}) is not true. For example, if $\sigma
=\left\vert \psi\right\rangle \left\langle \psi\right\vert $ is a pure state,
\[
\mathrm{D}_{\left\vert 1-r\right\vert }^{\max}\left(  \rho\Vert\sigma\right)
=1-\left\langle \psi\right\vert \rho\left\vert \psi\right\rangle +\left\langle
\psi\right\vert \rho\rho_{22}^{-1}\rho\left\vert \psi\right\rangle ,
\]
where $\rho_{22}\colon=\left(  I-\left\vert \psi\right\rangle \left\langle
\psi\right\vert \right)  \rho\left(  I-\left\vert \psi\right\rangle
\left\langle \psi\right\vert \right)  $ and $\rho_{22}^{-1}$ denotes its
generalized inverse \cite{Matsumoto}.

However, if $\rho$ and $\sigma$ are very close so that
\begin{equation}
\left\Vert \,\left\vert \rho-\sigma\right\vert \right\Vert \leq\text{minimum
eigenvalue of }\rho+\sigma, \label{close}%
\end{equation}
it is true.

Another sufficient condition is
\[
\left(  \rho-\sigma\right)  ^{2}=\left\vert \rho-\sigma\right\vert ^{2}%
\leq\left(  \rho+\sigma\right)  ^{2}.
\]
To see this is sufficient, take the square root of both sides of inequality:
then we obtain (\ref{A>0}). (Recall $\sqrt{\cdot}$ is operator monotone. This
condition is not necessary, since $r^{2}$ is not operator monotone.)
Rearranging the terms, we have
\begin{equation}
\rho\sigma+\sigma\rho\geq0. \label{ls+sl}%
\end{equation}

\subsection{2\thinspace-\thinspace dimensional case}

In this subsection, we assume $\dim\mathcal{H}=2$ and $\mathrm{tr}%
\,\rho=\mathrm{tr}\,\sigma=1$, and compute the set $\left\{  \sigma
;\text{(\ref{D=TV})}\right\}  $ for each fixed $\rho$, using the necessary and
sufficient condition given by (\ref{A+d}) and (\ref{dd=0}). As it turns out,
this set is the spheroid, with focal points $\rho$ and $\mathbf{1}-\rho$, and
touching to the surface of Bloch sphere at each end of the longest axis.

Since $\mathrm{tr}\,\rho=\mathrm{tr}\,\sigma=1$,
\[
c:=\mathrm{tr}\,\Delta_{1}=\mathrm{tr}\,\Delta_{2}=1-\mathrm{tr}\,A,
\]
and
\[
0\leq c\leq1.
\]
Let $v_{\rho}$ , $v_{\sigma}$, $u_{1}$, $u_{2}$, and $u_{A}$ be the Bloch
vector of $\rho$, $\sigma$, $\frac{1}{c}\Delta_{1}$, $\frac{1}{c}\Delta_{2}$ ,
and $\frac{1}{1-c}A$ , respectively. Also, (\ref{dd=0}) holds iff $\Delta_{1}$
and $\Delta_{2}$ are rank\thinspace-\thinspace1 and $u_{2}=-u_{1}.$Therefore,
by (\ref{A+d}),
\[
v_{\rho}=cu_{1}+\left(  1-c\right)  u_{A},\,v_{\sigma}=-cu_{1}+\left(
1-c\right)  u_{A}.
\]
Therefore,%
\[
v_{\sigma}-v_{\rho}=-2cu_{1},\,v_{\sigma}-\left(  -v_{\rho}\right)  =2\left(
1-c\right)  u_{A}.
\]
Let $\left\Vert \cdot\right\Vert $ denote the Euclid norm in $\mathbb{R}^{3}$,
and
\[
\left\Vert v_{\sigma}-v_{\rho}\right\Vert +\left\Vert v_{\sigma}-\left(
-v_{\rho}\right)  \right\Vert =2\left(  c\left\Vert u_{1}\right\Vert +\left(
1-c\right)  \left\Vert u_{A}\right\Vert \right)  \leq2.
\]

The set $\left\{  \sigma;\text{(\ref{D=TV})}\right\}  $ is fairly large. For
example, if the largest eigenvalue of $\rho$ is $\leq0.85$, this occupies more
than the half of the volume of the Bloch sphere.

If
\[
\rho=\left[
\begin{array}
[c]{cc}%
a & \overline{c}\\
c & b
\end{array}
\right]  ,\sigma=\left[
\begin{array}
[c]{cc}%
a & -\overline{c}\\
-c & b
\end{array}
\right]  ,\,\,\,\left(  a\geq b\right)
\]
the minimization problem (\ref{Dmax-|1-l|}) is solved explicitly. With
$Z:=\mathrm{diag}\left(  1,-1\right)  $, $\sigma=Z\rho Z^{\dagger}$,
$\rho=Z\sigma Z^{\dagger}$. Thus, if $A$ satisfies constrains of
(\ref{Dmax-|1-l|}), so does $\frac{1}{2}\left(  ZAZ^{\dagger}+A\right)  $, and
$\mathrm{tr}\,A=\mathrm{tr}\,\frac{1}{2}\left(  ZAZ^{\dagger}+A\right)  $.
Therefore, without loss of generality, we suppose $A$ is diagonal. After some
elementary analysis, the optimal $A$ turns out to be
\[
A=\left\{
\begin{array}
[c]{cc}%
\mathrm{diag}\,\left(  a-\left\vert c\right\vert ,b-\left\vert c\right\vert
\right)  , & \left(  a\geq b\geq\left\vert c\right\vert \right)  \\
\mathrm{diag}\left(  a-\frac{\left\vert c\right\vert ^{2}}{b},0\right)  , &
\left(  a\geq\left\vert c\right\vert \geq b\right)
\end{array}
\right.
\]
and we have
\[
\mathrm{D}_{\left\vert 1-r\right\vert }^{\max}\left(  \rho\Vert\sigma\right)
=\left\{
\begin{array}
[c]{cc}%
4|c|=\left\Vert \rho-\sigma\right\Vert _{1}, & \left(  a\geq b\geq\left\vert
c\right\vert \right)  \\
2\left(  b+\frac{\left\vert c\right\vert ^{2}}{b}\right)  . & \left(
a\geq\left\vert c\right\vert \geq b\right)
\end{array}
\right.
\]

\end{document}